\definecolor{darkgreen}{rgb}{0,0.5,0}
\definecolor{darkblue}{rgb}{0,0,0.8}
\definecolor{darkred}{rgb}{0.8,0,0}
\newtheorem{definition}{Definition}[section]
\newtheorem{lemma}[definition]{Lemma}
\newtheorem{theorem}[definition]{Theorem}
\newtheorem{corollary}[definition]{Corollary}
\newcommand{\bigo}{\mathcal{O}}
\newcommand{\ignore}[1]{}
\newcommand{\sgn}{\mathrm{sgn}}
\newcommand{\score}{S}
\title{\bf Faster Approximate(d) Text-to-Pattern \bm{$L_1$} Distance}
\author{\LARGE Przemys\l{}aw~Uzna\'nski}
\affil{
\Large Department of Computer Science,\\ ETH Zürich, Switzerland}
\date{}
\begin{document}
\maketitle

\thispagestyle{empty}

\abstract{The problem of finding \emph{distance} between \emph{pattern} of length $m$ and \emph{text} of length $n$ is a typical way of generalizing pattern matching to incorporate dissimilarity score. For both Hamming and $L_1$ distances only a super linear upper bound $\widetilde\bigo(n\sqrt{m})$ are known, which prompts the question of relaxing the problem: either by asking for $(1 \pm \varepsilon)$ approximate distance (every distance is reported up to a multiplicative factor), or $k$-approximated distance (distances exceeding $k$ are reported as $\infty$). We focus on $L_1$ distance, for which we show new algorithms achieving complexities respectively $\widetilde\bigo(\varepsilon^{-1} n)$ and $\widetilde\bigo((m+k\sqrt{m}) \cdot n/m)$. This is a significant improvement upon previous algorithms with runtime $\widetilde\bigo(\varepsilon^{-2} n)$ of Lipsky and Porat [Algorithmica 2011] and $\widetilde\bigo(n\sqrt{k})$ of Amir, Lipsky, Porat and Umanski [CPM 2005]. 
}

\section{Introduction}
One of the fundamental problems in text algorithms is, given text $T$ of length $n$ and pattern $P$ of length $m$, both over some (integer) alphabet $\Sigma$, the computation of distance between $P$ and every $m$-substring of $T$. This is a popular way of searching for imperfect occurrences of pattern in the text, generalizing standard pattern matching problem. Two distance functions received most attention in the context of integer sequences, Hamming distance and $L_1$ distance. 

Those problems exhibit the usual story (in pattern matching problems) of starting with a naive quadratic time upper bound, and the goal being to develop close to linear time one. The result of \cite{Abrahamson87} have shown algorithm computing text-to-pattern Hamming distance in time $\bigo(n \sqrt{m \log m})$. This algorithm goes beyond the usual repertoire of \emph{combinatorial} tools, and uses boolean convolution as a subroutine.
Using similar approach, \cite{DBLP:conf/cpm/CliffordCI05} and \cite{Amir2005} have shown identical upper bound for $L_1$ distance version of the problem. 

However, the bound of $\widetilde\bigo(n^{3/2})$ is unsatisfactory. What followed was a compelling argument (see note \cite{Clifford}) showing that any significant improvement to the bound for Hamming distances by a \emph{combinatorial} algorithm leads automatically to an improvement in the complexity of \emph{boolean matrix multiplication}, suggesting that further progress might be hard. Later, a direct reduction from Hamming distance problem to $L_1$ one was shown \cite{DBLP:journals/ipl/LipskyP08a}, and a reverse reduction was presented in \cite{DBLP:journals/corr/abs-1711-03887}, together with full suite of two way reductions between those two metrics and other score functions,  linking the complexity of those problems together (ignoring poly-logarithmic factors).

Thus, the natural next step in is to consider relaxations to those two problems, i.e. require only reporting of a multiplicative $(1\pm\varepsilon)$ approximation of the distances. A result of \cite{Karloff93} has shown how to use random $\Sigma \to \{0,1\}$ projections to achieve approximation of the Hamming distance in time $\bigo(\varepsilon^{-2} n \log^3 m)$. Many believed that the so-called \emph{variance bound}, that is $\varepsilon^{-2}$ dependency is tight, given the evidence of similar lower-bounds in sketching of Hamming distance (c.f. \cite{DBLP:conf/soda/Woodruff04}, \cite{DBLP:journals/toc/JayramKS08}, \cite{DBLP:journals/siamcomp/ChakrabartiR12}). However, in a breakthrough paper \cite{DBLP:conf/focs/KopelowitzP15} presented (a quite involved) algorithm working in time $\bigo(\varepsilon^{-1} n \log \varepsilon^{-1} \log n \log m \log |\Sigma|)$. That result was recently simplified in \cite{DBLP:conf/soda/KopelowitzP18}, with a slightly improved runtime $\bigo(\varepsilon^{-1} n \log n \log m)$. For $L_1$ distance, the only known approximation algorithm was one from \cite{DBLP:journals/algorithmica/LipskyP11}, having runtime $\bigo(\varepsilon^{-2} n \log m \log |\Sigma|)$. The question whether the barrier of $\varepsilon^{-2}$ can be beaten for $L_1$ distance remained open.

Another standard way of relaxing exact text-to-pattern distance is to report exactly only the values not exceeding certain threshold value $k$, the so-called $k$-approximated distance. The motivation for this comes from interpretation of exact text-to-pattern Hamming distance as simply counting \emph{mismatches} in exact pattern matching, and then $k$-approximated Hamming distance becomes reporting only alignments where there are at most $k$ mismatches. The very first solution to the Hamming distances version of this problem was shown in \cite{LandauV86} working in time $\bigo(nk)$, using essentially a very combinatorial approach of taking $\bigo(1)$ time per mismatch per alignment using LCP queries. This initiated a series of improvements to the complexity, with algorithms of complexity $\bigo(n \sqrt{k \log k})$ and $\bigo((k^3 \log k + m)\cdot n/m)$ in \cite{AmirLP04}, later improved to $\bigo((k^2 \log k + m\ \text{poly} \log m)\cdot n/m)$ by \cite{CliffordFPSS16} and finally $\bigo( (m \log^2 m \log |\Sigma| + k \sqrt{m \log m})\cdot n/m)$ by \cite{DBLP:journals/corr/GawrychowskiU17}. The last result also provides an argument that the $\bigo((m + k\sqrt{m}) \cdot n/m)$ might be tight up to sub-polynomial factors, by extending argumentation from \cite{Clifford} to incorporate $k$ into the reduction. On the $L_1$ side, the only known $k$-approximated algorithm was from \cite{Amir2005} with complexity of $\bigo(n \sqrt{k \log k})$. A question on whether one can design algorithms that for polynomially large values of $k$ still work in almost-linear time remained open (such as it for the Hamming distance when $k \le \sqrt{m}$).

\paragraph{Problem definition and preliminaries.}
Let $X = x_1 x_2 \ldots x_n$ and $Y = y_1 y_2 \ldots y_n$ be two words over integer alphabet $[M]$ for some constant $M = \text{poly}(n)$. We define their $L_1$ distance as $L_1(X,Y) = \sum_i |x_i - y_i|$, and their Hamming distance as $\text{Ham}(X,Y) = |\{ i : x_i \not= y_i \}|$. 

The \emph{exact} $L_1$ text-to-pattern distance between text $T = t_1 t_2 \ldots t_{n}$ and pattern $P = p_1 p_2 \ldots p_{m}$ is defined as an array $\score$ such that $\score[i] = L_1(T[i+1\ ..\ i+m], P) = \sum_{j=1}^{m} |t_{i+j} - p_{j}|$. The $(1\pm\varepsilon)$ \emph{approximate} text-to-pattern $L_1$ distance is the array $\score_\varepsilon$ such that for all $i$, $(1-\varepsilon) \cdot \score[i] \le \score_{\varepsilon}[i] \le (1+\varepsilon) \cdot \score[i]$.
The $k$-\emph{approximated} text-to-pattern $L_1$ distance is the array $\score_k$ such that for all $i$, $\score_k[i] = \score[i]$ when $\score[i] \le k$ and $\score_k[i] = \infty$ when $\score[i] > k$. The definitions for exact, approximate and approximated Hamming distance follow in the same manner.

We assume that all of the values in the input are positive. If not, then we can add some large integer $N$ to every value of input without changing the $L_1$ distance. Let $M = \text{poly}(n)$ be the upperbound on every value of the input. We also assume RAM model, with words big enough to hold integers up to $M$, and having arithmetic operations over those in constant time. Unless stated otherwise, we denote text length as $n$ and pattern length as $m$.

We define the size of \emph{run length encoding} (RLE) of a string as a number of different runs (maximal sequences of identical letters). We say that string is $k$-RLE if its RLE is at most $k$.

\paragraph{Our results.}
We show improved algorithms for both $(1\pm\varepsilon)$ approximation and $k$-approximated version of the text-to-pattern $L_1$ distance. 
\begin{theorem}
\label{th:approximate}
There is a randomized Monte Carlo algorithm that outputs $(1\pm\varepsilon)$ approximation of $L_1$ text-to-pattern distance in time $\bigo(\varepsilon^{-1}n \log^3 n \log m)$. The algorithm works with high probability.
\end{theorem}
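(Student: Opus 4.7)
The plan is to reduce the $(1\pm\varepsilon)$ approximate $L_1$ text-to-pattern problem to a small collection of binary-alphabet Hamming distance computations via a level-set (``thermometer'') decomposition. The key identity is $|x-y|=\sum_{\theta=1}^{M}(\mathbf{1}[x\ge\theta]\oplus\mathbf{1}[y\ge\theta])$, which aggregated over a sliding window yields the \emph{exact} formula $\score[i]=\sum_{\theta=1}^{M}\mathrm{Ham}(T^\theta[i+1..i+m],P^\theta)$, where $T^\theta_j:=\mathbf{1}[t_j\ge\theta]$ and $P^\theta_j:=\mathbf{1}[p_j\ge\theta]$. Exact $L_1$ thus reduces to $M$ boolean-alphabet Hamming instances, each computable in $\bigo(n\log m)$ time by FFT-based boolean convolution, and the whole task becomes replacing the factor $M$ by $\widetilde\bigo(\varepsilon^{-1})$ under the $(1\pm\varepsilon)$ relaxation.

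I would next sample $L=\bigo(\varepsilon^{-1}\log M)=\widetilde\bigo(\varepsilon^{-1})$ thresholds: every integer up to a cut-off of order $\varepsilon^{-1}$, and a geometric sequence $\theta_l=\lceil(1+\varepsilon)^{l+U}\rceil$ above it, with a uniformly random log-scale shift $U\in[0,1)$. The weighted Riemann-type estimator
\[
\score_\varepsilon[i]\;:=\;\sum_{l=0}^{L-1}(\theta_{l+1}-\theta_l)\cdot\mathrm{Ham}\bigl(T^{\theta_l}[i+1..i+m],P^{\theta_l}\bigr)
\]
is unbiased up to a $(1\pm\varepsilon)$ factor for each coordinate pair's individual contribution, and hence by linearity for $\score[i]$ itself. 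Averaging $\bigo(\log n)$ independent trials and then combining Chernoff concentration with a union bound over the $n$ alignments delivers the $(1\pm\varepsilon)$ guarantee at every alignment with high probability.

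Computationally, each of the $\widetilde\bigo(\varepsilon^{-1})$ thresholds costs one boolean FFT convolution of length $\bigo(n)$, i.e.\ $\bigo(n\log m)$, so the total running time is $\widetilde\bigo(\varepsilon^{-1} n\log m)$ times a few additional log factors (for the number of threshold levels, for independent trials, and for a likely recursive or RLE-based sub-step --- the explicit definition of RLE in the preliminaries suggests the author exploits the sparsity of $T^\theta,P^\theta$ at large $\theta$). The main obstacle is the approximation analysis rather than the algorithmics: a deterministic Riemann sum over a geometric grid has per-pair bias up to $\varepsilon\max(x,y)$, which can dwarf the true contribution $|x-y|$ when the two values are close; the random log-shift is precisely what makes this bias vanish to first order while keeping the estimator's variance small enough that $\bigo(\log n)$ rather than $\bigo(\varepsilon^{-1})$ repetitions suffice. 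That is the point at which one breaks the $\varepsilon^{-2}$ ``variance bound'' barrier of Lipsky and Porat, and getting this step tight is the crux of the proof.
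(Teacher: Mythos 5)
Your plan takes a genuinely different route than the paper. You decompose $|x-y|$ via level sets (thermometer coding) and importance-sample a geometric grid of thresholds with a random log-shift; the paper instead adds a single uniformly random shift $\Delta$ to the whole input, sweeps over bit positions $i$, and for each $i$ runs a generalized weighted matching over an alphabet of $b=\Theta(\log(\varepsilon^{-1}\log n))$ bits, estimating the sign of $x-y$ from the highest differing bit inside a small sliding window of width $b$. Both decompositions are natural, but the two schemes have very different error behavior, and that is where your proposal has a genuine gap.

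You correctly flag the variance control as the crux, but the issue is not merely unfinished: the estimator as stated does not concentrate, neither under averaging nor under a median of $\bigo(\log n)$ repetitions. Consider an alignment in which every aligned pair equals $(x,x+1)$ for a single large $x\ge\varepsilon^{-1}$, so $\score[i]=m$. For one draw of $U$, either no geometric threshold falls in the length-$1$ interval $(x,x+1]$ (probability roughly $1-\tfrac{1}{\varepsilon x}$) and every pair contributes $0$, or exactly one does and every pair contributes the same gap $\theta_{l+1}-\theta_l\approx\varepsilon x$. The per-pair estimates are \emph{perfectly correlated} because they share the thresholds, so no averaging across the $m$ positions occurs: the estimator for $\score[i]$ is $0$ with probability about $1-\tfrac{1}{\varepsilon x}$ and about $\varepsilon x\cdot m$ otherwise. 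Taking $x=K/\varepsilon$ with $K\gg\log n$, the median of $\bigo(\log n)$ trials is $0$ with high probability, and the mean is heavy-tailed (a single hit overshoots by a factor $K/\log n$). Chernoff does not apply because the summands are not bounded relative to the target, and Markov fails too: the expected \emph{absolute} error per pair is $\Theta(|x-y|)$, not $\bigo(\varepsilon|x-y|)$, since the estimate is essentially a $\{0,\text{huge}\}$ Bernoulli. Unbiasedness alone is useless here.

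Contrast this with the paper's mechanism. There a pair is ``$t$-bad'' (the top differing bit of $x'+\Delta,y'+\Delta$ sits $t$ positions above the top bit of $|x-y|$) with probability at most $2^{-t+1}$, and \emph{conditioned} on being $t$-bad the multiplicative error is at most $2^{t+2}\delta$. The $2^{-t}$ probability decay exactly cancels the $2^t$ error growth, so the \emph{expected} error is $\bigo(\delta\log M)|x-y|=\tfrac{\varepsilon}{3}|x-y|$ for every pair. This bounded expected per-pair error survives the correlation across positions (linearity of expectation), and then Markov plus a median of $\Theta(\log n)$ trials finishes the job. Your level-set estimator lacks this probability-versus-error cancellation: a narrow interval $(x,y]$ is hit with probability $\approx\tfrac{|x-y|}{\varepsilon\max(x,y)}$, but when it is hit the overshoot is $\approx\varepsilon\max(x,y)$, so the two effects multiply to $\Theta(|x-y|)$ rather than $\bigo(\varepsilon|x-y|)$. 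To salvage the thermometer approach you would need a fundamentally different way of placing or weighting thresholds (or an explicit variance-reduction step across the $m$ positions that the perfect correlation in the adversarial instance above rules out), and as written the proposal does not contain such a device.
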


\begin{theorem}
\label{th:approximated}
There is a deterministic algorithm that outputs $k$-approximated $L_1$ text-to-pattern distance in time $\bigo((m \log^3 m + m \log^2 n + k \sqrt{m \log m} \cdot \log^2 n) \cdot n/m)$. 
\end{theorem}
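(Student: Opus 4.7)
The plan is to follow the standard reduction to blocks: split the text into $\lceil n/m \rceil$ overlapping blocks of length $2m$, so that every length-$m$ alignment lies inside a block, and argue that it suffices to solve one block in time $\widetilde\bigo(m+k\sqrt{m\log m})$; multiplying by the $n/m$ block count then yields the stated bound. Inside a block, I would exploit the threshold decomposition
\[
L_1(X,Y)=\sum_{j=1}^{M}\mathrm{Ham}(X_{\ge j},Y_{\ge j}),\qquad X_{\ge j}[i]=[X[i]\ge j],
\]
and group the $M=\mathrm{poly}(n)$ thresholds into $\bigo(\log n)$ geometric buckets $I_s=[2^s,2^{s+1})$, so that $L_1$ is written as a sum of $\bigo(\log n)$ clipped-distance contributions $f_{I_s}$, each of which is itself a bounded-range $L_1$ distance.

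The next step uses the trivial inequality $\mathrm{Ham}(X,Y)\le L_1(X,Y)$: any alignment with $L_1\le k$ must have Hamming distance at most $k$. So I would run the $k$-mismatch Hamming algorithm of Gawrychowski--Uzna\'nski once on the block in time $\widetilde\bigo(m+k\sqrt{m\log m})$ to obtain the candidate set $C$ of admissible alignments, together with a succinct representation of the at most $k$ mismatch positions per alignment. For each bucket $I_s$ the contribution $f_{I_s}$ is then computed at every alignment in $C$ and the results summed; the outcome is exact $L_1$, which is finally thresholded against $k$.

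The main obstacle is that a naive per-alignment evaluation of each $f_{I_s}$ costs $\bigo(k)$ (one probe per mismatch) and yields the prohibitive total $\bigo(mk\log n)$ per block. To beat this I plan to recast each $f_{I_s}$ as a fresh $k$-mismatch Hamming instance on the bucket-clipped pattern and text and invoke the Gawrychowski--Uzna\'nski subroutine $\bigo(\log n)$ times per bucket, giving $\bigo(\log^2 n)$ calls overall; this is exactly what produces the $\log^2 n$ factor on the $k\sqrt{m\log m}$ term of the stated bound. The delicate part of the argument will be to show that these clipped subproblems remain genuine $k$-mismatch instances (so that the $k$ bound propagates down the recursion) and that the additive $\widetilde\bigo(m)$ cost of each Hamming call is charged once per bucket (giving the $m\log^3 m$ and $m\log^2 n$ additive terms) rather than once per alignment, which will require carefully reusing the internal periodic/aperiodic case split of Gawrychowski--Uzna\'nski across the $\bigo(\log n)$ threshold layers without duplicating work.
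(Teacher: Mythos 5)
Your plan and the paper's proof diverge in a way that hides a real gap. The paper does not reduce directly to $k$-mismatch Hamming: it first \emph{kernelizes} the $L_1$ block to an $\bigo(k)$-RLE instance (Theorem~\ref{th:approx_to_bounded_rle_l1}, built on Karloff filtering and the periodic/aperiodic split of Gawrychowski--Uzna\'nski), then applies the linearity-preserving $L_1\to$Hamming reduction of Corollary~\ref{cor:reduction} (which preserves the RLE bound), and finally solves each resulting bounded-RLE Hamming instance \emph{exactly} via Corollary~\ref{cor:run_complexity} in $\bigo(m+k\sqrt{m\log m})$. Your plan inverts the order: decompose $L_1$ into Hamming pieces first, then throw the full $k$-mismatch algorithm at each piece.

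The obstruction to your order of operations is exactly the one the paper flags in its closing remark: the $L_1\to$Hamming reduction has some \emph{negative} coefficients $\alpha_i$, so you cannot recover exact $L_1$ values by linearly combining outputs of a subroutine that returns values capped at $k$ (arithmetic in $\{0,\dots,k,\infty\}$ is not closed under subtraction). The single clean sum with all $+1$ coefficients is the raw threshold decomposition $L_1=\sum_{j=1}^{M}\mathrm{Ham}(X_{\ge j},Y_{\ge j})$, but that has $M=\mathrm{poly}(n)$ terms; as soon as you compress it to polylogarithmically many Hamming instances, negative coefficients reappear. Your ``delicate part'' — reusing the periodic/aperiodic split across buckets so that the per-alignment cost does not blow up — is not a detail you can defer: it is essentially the kernelization step, and after kernelizing you need \emph{exact} (not capped) Hamming on compressed inputs, which is precisely Corollary~\ref{cor:run_complexity} rather than a $k$-mismatch call. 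Without this the approach does not close.

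Two smaller issues. First, the geometric bucketing buys you nothing here: Corollary~\ref{cor:reduction} already gives $\bigo(\log^2 M)=\bigo(\log^2 n)$ Hamming instances on the full range in one shot, whereas your per-bucket count of ``$\bigo(\log n)$ Hamming calls'' is unjustified (a range-$2^s$ bucket would need $\bigo(s^2)$ instances by that reduction, summing to $\bigo(\log^3 n)$, worse than the target). Second, the additive $m\log^3 m$ in the theorem comes from Karloff's approximation run during kernelization, not from charging Hamming calls once per bucket.
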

This shows, that similarly to reporting $k$-approximated Hamming distance, one can report all positions exactly where the $L_1$ distance is at most $\sqrt{m}$ in almost linear time. 



Our results show that in the text-to-pattern approximate/approximated distance reporting, there does not seem to be significant difference in the complexities of Hamming and $L_1$ distance versions (up to current upper bounds). 

We also link $k$-approximated Hamming and $L_1$ distance problems.
\begin{theorem}
\label{th:ham_l1_reduction}
Let $T(n,m,k)$ be the runtime of $k$-approximated text-to-pattern Hamming distance. Then $k$-approximated text-to-pattern $L_1$ distance is computed in time $\bigo(n \log^3m + T(m,m,k)\cdot \log^2 n \cdot n/m)$
\end{theorem}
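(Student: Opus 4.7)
The strategy is a reduction from $k$-approximated $L_1$ to $k$-approximated Hamming built from three ingredients. First, the standard overlapping-block trick would partition the text $T$ of length $n$ into $\bigo(n/m)$ windows of length $\bigo(m)$ and solve the problem per window, producing the $n/m$ multiplicative factor on the Hamming term. Second, the pointwise identity $|a-b|=\sum_{v=1}^{M}\mathbb{1}[\min(a,b)<v\le\max(a,b)]$, summed over aligned positions, gives for every alignment $i$
\[
\score[i] \;=\; \sum_{v=1}^{M}\text{Ham}\bigl(T_{\ge v}[i+1\,..\,i+m],\; P_{\ge v}\bigr),
\]
where $X_{\ge v}$ denotes the binary indicator string $(\mathbb{1}[x_j\ge v])_j$. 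Every summand is at most $\score[i]$, so whenever $\score[i]\le k$ the $k$-approximated Hamming oracle applied to the binary strings $T_{\ge v}$ and $P_{\ge v}$ returns the summand exactly, and the sum reproduces $\score[i]$ exactly.

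The obstruction is that this would naively need $M$ Hamming invocations. I would compress the $M$ thresholds to $\bigo(\log^2 n)$ informative ones: the outer $\bigo(\log M)=\bigo(\log n)$ level uses a dyadic partition of the value range at powers of two, and inside each dyadic bucket an inner $\bigo(\log n)$ binary refinement pins down that bucket's contribution to $\sum_v H_v(i)$ using the $\score[i]\le k$ promise together with the $k$-approximated Hamming oracle. To enable the use of that promise, I would preprocess with a constant-factor approximate $L_1$ routine, built from $\bigo(\log^2 m)$ FFT convolutions at dyadic thresholds for a total of $\bigo(n\log^3 m)$ time, and mark as $\infty$ every alignment whose approximation certifies $\score[i]>k$. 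The Hamming-based reduction is applied only to surviving alignments, where $\score[i]=\bigo(k)$ is safe to assume.

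The principal obstacle is the exactness required in the compression step. A plain dyadic sum would yield only a logarithmic approximation to $\sum_v H_v(i)$, since each bucket rounds per-position contributions to its bucket width; the technical heart of the argument is to design the inner refinement so that, under the promise $\score[i]\le k$ certified by the preprocessing, the rounding errors telescope and the weighted combination of the $\bigo(\log^2 n)$ $k$-approximated Hamming answers equals $\score[i]$ exactly whenever it is at most $k$ (and otherwise one of the answers safely triggers overflow and forces $\infty$). Verifying this cancellation, and tracking the constants through the two-level scheme so that the cost per block is $\bigo(T(m,m,k)\log^2 n)$ Hamming work plus the block's share of the FFT preprocessing, is what I expect to occupy most of the argument.
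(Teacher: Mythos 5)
Your threshold decomposition $\score[i]=\sum_{v=1}^{M}\text{Ham}\bigl(T_{\ge v}[i+1..i+m],P_{\ge v}\bigr)$ is a genuinely different starting point from the paper. The paper's proof of Theorem~\ref{th:ham_l1_reduction} is the chain: first kernelize via the pattern's periodic structure (Theorem~\ref{th:approx_to_bounded_rle_l1}) to produce $\bigo(n/m)$ instances of exact $L_1$ distance on $\bigo(k)$-RLE strings of length $\bigo(m)$; then apply a linearity-preserving reduction from $L_1$ to Hamming (Corollary~\ref{cor:reduction}) that preserves the RLE bound; then reduce $k$-RLE Hamming to $\bigo(1)$ instances of $2k$-approximated Hamming (Lemma~\ref{lem:rle_to_approximated}) by passing to the second discrete derivative of the score array and a sparse representation of block boundaries. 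Your proposal bypasses the RLE kernelization entirely.

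The gap in your proposal is the compression from $M=\text{poly}(n)$ thresholds to $\bigo(\log^2 n)$ Hamming invocations, which you flag as the ``technical heart'' but do not construct, and which I do not believe can be made to work in the form you describe. Two constraints are in tension. You need a fixed family of instances with coefficients $\alpha_j$ such that $\sum_j\alpha_j H_j(i)=\score[i]$ exactly, and you simultaneously need each $H_j(i)=\bigo(k)$ whenever $\score[i]\le k$ so that the $k$-approximated oracle answers exactly. The pure threshold decomposition achieves the second (all $\alpha_v=1$ and $H_v(i)\le\score[i]$) but needs $M$ terms; any compression to $\bigo(\log^2 M)$ terms of linearity-preserving form $|x-y|=\sum_j\alpha_j\cdot\text{Ham}(f_j(x),g_j(y))$ necessarily uses negative coefficients --- the paper's closing remark states exactly this and names it as the reason a direct reduction from $k$-approximated $L_1$ to $k$-approximated Hamming fails --- and with negative coefficients a small $L_1$ distance can arise as a difference of two large Hamming distances, on which the $k$-approximated oracle returns $\infty$. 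Moreover, the ``inner binary refinement using the $\score[i]\le k$ promise'' cannot adapt per alignment: the Hamming oracle answers all $\Theta(m)$ alignments of a window in one call, so the reduction strings and coefficients must be fixed in advance, independently of $i$, and the breakpoints of $v\mapsto H_v(i)$ vary arbitrarily with $i$. The paper's RLE kernelization is precisely what sidesteps this obstruction: it converts bounded $L_1$ score into a combinatorial bound ($\bigo(k)$ runs) on the strings themselves, applies the sign-agnostic linearity-preserving reduction at that level (where the run bound, not the distance, is preserved), and only at the end converts bounded RLE back into a bounded Hamming distance via the derivative/sparsity trick. Without this or a comparable structural step, the claimed $\bigo(\log^2 n)$-instance compression is unsupported.
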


\begin{corollary}
\label{th:reduction2}
If the $k$-approximated text-to-pattern  Hamming distance is computed in time $\widetilde\bigo(n+(k \sqrt{m})^{1-\delta} \cdot n/m)$ for $\delta\ge0$ then $k$-approximated text-to-pattern  $L_1$ distance is computed in time $\widetilde\bigo(n+(k \sqrt{m})^{1-\delta} \cdot n/m)$ as well.
\end{corollary}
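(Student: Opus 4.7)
The plan is essentially to apply Theorem~\ref{th:ham_l1_reduction} as a black box and simplify. Since the hypothesis is a bound on the general Hamming runtime $T(n,m,k)$, my first step is to specialize it to square inputs: setting $n=m$ gives
\[
T(m,m,k) \;=\; \widetilde\bigo\!\left(m + (k\sqrt{m})^{1-\delta}\right),
\]
because the factor $n/m$ collapses to $1$.

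Next, I substitute this into the reduction bound from Theorem~\ref{th:ham_l1_reduction}. The total runtime for $k$-approximated $L_1$ becomes
\[
\widetilde\bigo\!\left(n + T(m,m,k)\cdot n/m\right)
\;=\; \widetilde\bigo\!\left(n + \bigl(m + (k\sqrt{m})^{1-\delta}\bigr)\cdot n/m\right).
\]
The $m \cdot n/m$ term contributes only $n$, which is absorbed into the leading $n$ (all polylog factors are swallowed by the $\widetilde\bigo$ notation, including the $\log^2 n$ overhead and the $\log^3 m$ additive term from the reduction). What remains is exactly $\widetilde\bigo(n + (k\sqrt{m})^{1-\delta}\cdot n/m)$, which matches the claim.

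There is no real obstacle here beyond bookkeeping: the corollary is a one-line consequence of Theorem~\ref{th:ham_l1_reduction}. The only mild care needed is to verify that the overhead terms in the reduction ($n\log^3 m$ additively and $\log^2 n$ multiplicatively on $T(m,m,k)$) do not dominate the target bound, but this is immediate because $\widetilde\bigo$ absorbs polylogarithmic factors and the additive $n\log^3 m$ is $\widetilde\bigo(n)$.
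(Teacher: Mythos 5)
Your proof is correct and matches the paper's intent exactly: the corollary is stated immediately after Theorem~\ref{th:ham_l1_reduction} precisely as this one-line specialization (plug $n=m$ into the hypothesis, substitute $T(m,m,k)$ into the reduction bound, and absorb polylogarithmic overhead into $\widetilde\bigo$). The paper gives no separate proof, so there is nothing further to compare.
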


\begin{figure}[t]
\tikzstyle{lineDotted} = [draw, -latex', thick, dotted]
\tikzstyle{lineDashed} = [draw, -latex', thick, dashed]
\tikzstyle{lineTwoDashed} = [draw, latex'-latex', thick, dashed]

\tikzstyle{lineSolid} = [draw, -latex', thick]
\tikzstyle{block} = [draw, rectangle, text centered, node distance = 6em]
\centering
\begin{tikzpicture}

\node[block, anchor=center] (approximatedHam){\makecell[c]{$k$-approximated\\ Hamming distance}};
\node[block, right=6em of approximatedHam] (approximatedL1){\makecell[c]{$k$-approximated\\ $L_1$ distance}};
\node[block, left=6em of approximatedHam] (BMM){\makecell[c]{Boolean Matrix\\Multiplication}};
\node[block, below=6em of approximatedHam] (boundedRLEHam){\makecell[c]{$k$-RLE\\ Hamming distance}};
\node[block, below=6em of approximatedL1] (boundedRLEL1){\makecell[c]{$k$-RLE\\ $L_1$ distance}};

\path[lineDashed] ([xshift=-1em]approximatedHam.south) -- node[xshift=-2em]{\makecell[c]{\cite{DBLP:journals/corr/GawrychowskiU17}\\\eqref{cor:approx_to_bounded_rle_ham}}}([xshift=-1em]boundedRLEHam.north);
\path[lineSolid] ([xshift=1em]boundedRLEHam.north) -- node[xshift=1.5em]{\eqref{lem:rle_to_approximated}}([xshift=1em]approximatedHam.south);
\path[lineTwoDashed] (boundedRLEHam) -- node[yshift=1.5em]{\makecell[c]{\cite{DBLP:journals/corr/abs-1711-03887}\\\eqref{cor:reduction}}}(boundedRLEL1);
\path[lineDashed] (BMM) -- node[yshift=1em]{\cite{DBLP:journals/corr/GawrychowskiU17}} (approximatedHam);
\path[lineSolid] (approximatedL1.south) -- node[xshift=2em]{\eqref{th:approx_to_bounded_rle_l1}}(boundedRLEL1.north);
\end{tikzpicture}
\caption{Existing (dashed lines) and new (solid lines) reductions.}
\end{figure}
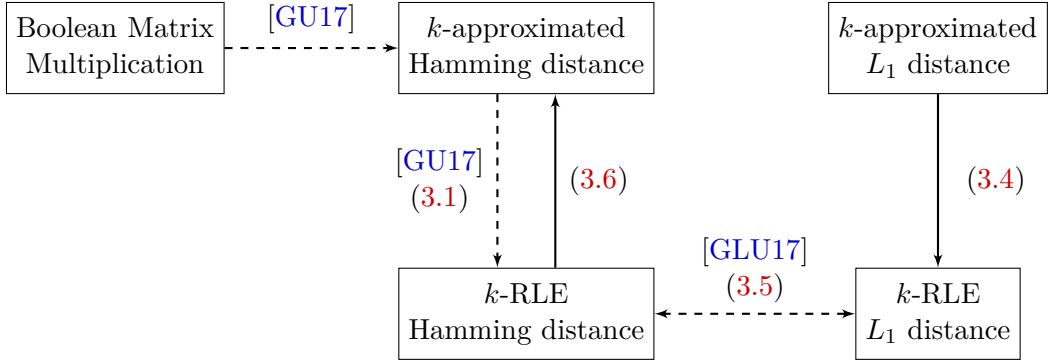


\paragraph{Overview of the techniques.}
Main technique used in our approximation algorithm, just as in the previous work of \cite{DBLP:journals/algorithmica/LipskyP11}, are the \emph{generalized weighted mismatches}: given arbitrary weight function $\sigma : \Sigma \times \Sigma \to \mathbb{Z}$, we output array $\score_{\sigma}$ such that $\score[i] = \sum_{j=1}^m \sigma(t_{i+j}, p_j)$. We use the following algorithm described first by \cite{DBLP:journals/algorithmica/LipskyP11}, that computes $\score_{\sigma}$ in time $\bigo(|\Sigma| n \log m)$: for each $c \in |\Sigma|$, the contribution of this letter can be computed from convolution of two vectors,  first being $\chi_c(P)$, the characteristic vector of letter $c$ in $P$, and second being $\{\sigma(c,t_i) \}_{i=1}^{n}$.

In approximated algorithm, we use the techniques of \emph{alignment filtering} and \emph{kernelization}, used in the context of Hamming distances by \cite{CliffordFPSS16}  and then refined and simplified by \cite{DBLP:journals/corr/GawrychowskiU17}. The general idea is to first consider the periodic structure of the pattern. If the pattern is not periodic enough, then $m$-substrings of text that have small distance to pattern must occur not too often. One can use approximate algorithm to filter out all the alignments with too large distance, and manually verify all the $\bigo(n/k)$ alignments that remain in time $\bigo(k)$ per each. If the pattern is periodic enough, then both the pattern and text can be rearranged into new instance, that retains letter alignments, and both new pattern and new text are compressible (both have RLE of only $\bigo(k)$ blocks). This reduces problem of lets say, text-to-pattern Hamming distances to the same problem, but with additional constraints on RLE of pattern and text.

\emph{Linearity preserving reductions} introduced in \cite{DBLP:journals/corr/abs-1711-03887}  are a formalization of existing previously reductions between metrics (cf. \cite{DBLP:journals/ipl/LipskyP08a}).
Main idea is that in order to show a reduction between two pattern-matching problems, one can represent them as lets say $(+,\diamond)$ and $(+,\square)$ convolutions, and show a reduction just between $\diamond$ and $\square$ binary operators. To make such reduction work, it needs to be of a specific form.
More precisely, fix integer $t$ being the size of reduction, integer coefficients $\alpha_1,\ldots,\alpha_t$ and functions $f_1, \ldots, f_t, g_1, \ldots, g_t$ such that for any $x,y$:
$$x\ \square\ y= \sum_{i=0}^t \alpha_i \cdot (f_i(x) \diamond g_i(y)).$$
Then $(+,\square)$-convolution of $T$ and $P$ is computed as a linear combination of $(+,\diamond)$-convolutions $f_i(P)$ with $g_i(P)$, where $f(X) = f(x_1)f(x_2)\ldots f(x_n)$ for $X = x_1x_2\ldots x_n$.
\section{Proof of Theorem~\ref{th:approximate}}

\begin{algorithm}[h]
\caption{$(1\pm\varepsilon)$-approximation of text-to-pattern $L_1$ distance.}
\label{alg:approximate}
\KwIn{Integer strings $T$ and $P$.}
\KwOut{Score vector $\score_\varepsilon$.}
\Def{$\KwScore(x,y)$}
{
	$x_0 \gets x \bmod 2$\;
	$y_0 \gets y \bmod 2$\;
	\uIf{$x_0 = y_0$}
	{
 		\Return{0}\;
	}
	\uElseIf{ $\sgn(x - y) = \sgn(x_0 - y_0)$ }
	{
 		\Return{$1$}\;
	}
	\uElse
	{
		\Return{$-1$}\;
	}
}
\;
\Def{$\KwApproximate(T,P)$}
{
	$\Delta \gets \text{ u.a.r. integer from } 0 \text{ to } 2^{\lceil \log M \rceil}-1$\;
	$T' \gets T+\Delta$\;
	$P' \gets P + \Delta$\;
	$\score_{\varepsilon} \gets [0 \ldots 0]$\;
	\For{$i \gets 0$ \KwTo $\lceil \log M \rceil$}
	{
		$T'' \gets \lfloor T'/2^i \rfloor \bmod 2^{b}$\;
		$P'' \gets \lfloor P'/2^i \rfloor \bmod 2^{b}$\;
		$S \gets \KwWeighted(T'',P'',\KwScore)$\;
		$\score_{\varepsilon} \gets \score_{\varepsilon} + S \cdot 2^i$\;
	}
	\Return{$\score_{\varepsilon}$}\;	
}
\end{algorithm}

In this section we prove Theorem~\ref{th:approximate}.
We use a procedure $\KwWeighted(T,P,\KwScore)$ that computes,
for a text $T= t_{1} t_{2}\ldots t_{n}$ and a pattern $P= p_{1}p_{2}\ldots p_{m}$
and an arbitrary weight function $\sigma : \Sigma \times \Sigma \to \mathbb{Z}$,
the array $\score_{\sigma}$ such that $\score[i] = \sum_{j=1}^m \sigma(t_{i+j}, p_j)$
in in $\bigo(|\Sigma| n \log m)$ time.

Let $\delta = \frac{\varepsilon}{24 \cdot (3+\log M)} = \Theta(\varepsilon/\log n)$, and let $b$ be the smallest positive integer such that $2^b \ge 1/\delta$.  We claim that with such parameters, Algorithm~\ref{alg:approximate}
outputs the desired $(1\pm\varepsilon)$-approximation in the claimed time.
Let $\score_{\varepsilon}$ be its output.

\begin{theorem}
For any $i$, $\score[i]\cdot (1-\varepsilon) \le \score_{\varepsilon}[i] \le \score[i]\cdot(1+\varepsilon)$ with probability at least $2/3$.
\end{theorem}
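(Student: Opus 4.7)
The plan is to reduce the claim to a per-position analysis and then combine via Markov's inequality. By exchanging the order of the two summations implicit in the outer loop of \KwApproximate, the output decomposes as
$$\score_\varepsilon[j]=\sum_{k=1}^m \eta(t_{j+k},\,p_k),\qquad \eta(x,y)=\sum_{i=0}^{\lceil\log M\rceil}2^i\cdot\sigma_i(x+\Delta,\,y+\Delta),$$
where $\sigma_i(\alpha,\beta)=\KwScore\bigl(\lfloor\alpha/2^i\rfloor\bmod 2^b,\,\lfloor\beta/2^i\rfloor\bmod 2^b\bigr)$. It therefore suffices to prove the single-pair bound $E_\Delta\bigl[\bigl|\eta(x,y)-|x-y|\bigr|\bigr]\le (\varepsilon/3)|x-y|$ for every fixed $x,y$; linearity and Markov's inequality will then give the desired per-alignment statement with probability at least $2/3$.

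Next, I would establish the ``ideal'' identity: without the $\bmod 2^b$ reduction, the estimator would be exact. Setting $\alpha=x+\Delta$, $\beta=y+\Delta$ and inspecting the cases of \KwScore, the level-$i$ summand equals $\sgn\!\bigl(\lfloor\alpha/2^i\rfloor-\lfloor\beta/2^i\rfloor\bigr)\cdot 2^i\cdot\bigl(\lfloor\alpha/2^i\rfloor\bmod 2-\lfloor\beta/2^i\rfloor\bmod 2\bigr)$ when the bits at position $i$ of $\alpha,\beta$ disagree, and $0$ otherwise. Since the sign of the floor difference always matches $\sgn(\alpha-\beta)$ whenever the floors differ, summing telescopes to $\sgn(\alpha-\beta)(\alpha-\beta)=|x-y|$. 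Any deviation of the actual (modded) estimator thus arises from a \emph{sign-flip} at some level $i$---the $b$-bit windows $A_i,B_i$ compare oppositely to the (unbounded) floors because of modular wrap-around---and every such event, when co-occurring with a bit disagreement at position $i$, perturbs the estimator by exactly $\pm 2^{i+1}$.

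The technical heart is bounding the joint probability of a sign-flip and bit disagreement at level $i$ over uniform $\Delta$. Writing $K=\lfloor\alpha/2^i\rfloor-\lfloor\beta/2^i\rfloor$, a short analysis of carries shows $K\in\{K_0,K_0+1\}$ with $K_0=\lfloor|x-y|/2^i\rfloor$, with the value $K_0+1$ attained precisely when the low $i$ bits of $\Delta$ force a carry (probability $D_i/2^i$, where $D_i=|x-y|\bmod 2^i$). Conditionally on $K$, a sign-flip is equivalent to $\lfloor\beta/2^i\rfloor\bmod 2^b$ landing in a specific length-$K$ interval of $[0,2^b)$, which over uniform $\Delta$ occurs with probability at most $K/2^b\le(K_0+1)/2^b$. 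A case split on the parity of $K_0$ (equivalently, on bit $i$ of $|x-y|$), combined with the fact that bit disagreement at position $i$ is equivalent to $K$ being odd, bounds the expected per-level error $2^{i+1}\cdot P[\text{both events}]$ by $O(|x-y|/2^b)$. Summing over the $\lceil\log M\rceil+1$ levels then yields
$$E_\Delta\bigl[\bigl|\eta(x,y)-|x-y|\bigr|\bigr]=O(\delta\log M)\cdot|x-y|\le\tfrac{\varepsilon}{3}|x-y|,$$
using $2^b\ge 1/\delta$ and $\delta=\varepsilon/(24(3+\log M))$.

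The main obstacle I expect is this third step: making rigorous the near-independence between the low bits of $\Delta$ (which govern the carries responsible for bit disagreement) and the middle bits of $\Delta$ (which govern the window wrap-around), and then combining the two parity sub-cases so that the error contributions telescope rather than each costing $O(2^i)$. The remaining steps---the decomposition, the exactness in the unbounded case, and the final Markov inequality---are routine once the per-level probability bound is in place.
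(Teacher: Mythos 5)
Your overall architecture --- reduce to a single-pair expected-error bound $\mathbb{E}_\Delta\bigl[\,\bigl|\eta(x,y)-|x-y|\bigr|\,\bigr]\le(\varepsilon/3)|x-y|$, then linearity of expectation and Markov --- is the same as the paper's, and your exact decomposition of the error (the unbounded-window estimator telescopes to $|x-y|$ exactly; each wrap-around sign flip that co-occurs with a bit disagreement at level $i$ perturbs the sum by exactly $\pm 2^{i+1}$) is correct. But your accounting of the error is genuinely different from the paper's. The paper never estimates per-level flip probabilities: it sets $c=\max\{i:\alpha_i\ne\beta_i\}$ and $d=\lfloor\log_2|x-y|\rfloor$, calls the pair $t$-bad when $c-d=t$, observes that the levels $c,c-1,\ldots,c-b+1$ are \emph{always} correct (their $b$-bit windows contain bit $c$, so the local sign equals $\sgn(x'-y')$), so the total error is at most $2\sum_{i\le c-b}2^i<2^{c-b+2}\le|x-y|2^{t+2}\delta$, and pairs this with $\Pr[\text{at least }t\text{-bad}]\le|x-y|/2^{d+t}\le2^{-t+1}$; the geometric decay in $t$ cancels the geometric growth of the error bound, and a single, one-dimensional probability estimate (a fixed-length difference straddling a multiple of $2^{d+t}$) suffices.

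The step you flag as the main obstacle is, as written, a genuine gap, and it is exactly the step the paper's $t$-badness device sidesteps. You cannot multiply $\Pr[K\text{ odd}]$ by the conditional flip probability $K/2^b$: both the carry event and the window-wrap event are functions of $v=y+\Delta$, and once $2^{i+b}$ exceeds the range $2^{\lceil\log M\rceil}$ of $\Delta$, the pair $\bigl(v\bmod 2^i,\ \lfloor v/2^i\rfloor\bmod 2^b\bigr)$ is neither uniform nor independent. The gap is closable without independence: since $v$ ranges over a contiguous interval, the joint bad set has measure at most $K_0(2^i-D_i)\le|x-y|$ (when $K_0$ is odd) or $(K_0+1)D_i\le|x-y|$ (when $K_0$ is even) per period $2^{i+b}$, which yields the required $O(|x-y|/2^{i+b})$ density by direct counting; and for levels with $2^{i+b}\gg M$ no wrap-around can occur at all since $\lfloor\alpha/2^i\rfloor<2^b$. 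You also need the trivial bound at very low levels where $K\ge 2^b$ (there $2^{i+1}\le O(|x-y|/2^b)$ anyway). So your route works, but you must supply this counting argument --- or adopt the paper's global classification, which avoids the per-level joint-probability analysis entirely --- before the proof is complete.
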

\begin{proof}
Consider first $x = x_a$ and $y = y_b$, two characters of the input. We analyze how well Algorithm~\ref{alg:approximate} approximates $|x-y| = \sgn(x-y) \cdot (x-y)$ in the consecutive calls of $\KwWeighted$.
First, fix value of $\Delta$ and consider the binary representations of $x' = x+\Delta$ and $y' = y+\Delta$. More precisely, let $x' = \sum_i 2^i \cdot \alpha_i$ and $y' = \sum_i 2^i \cdot \beta_i$ for some $\alpha_i,\beta_i \in \{0,1\}$. Algorithm~\ref{alg:approximate} in essence estimates $|x - y| = \sgn(x-y) \sum_i 2^i (\alpha_i - \beta_i)$ with $C = \sum_i 2^i \gamma_i$ where $\gamma_i \in \{-1,0,1\}$ is the estimation of a contribution of $(\alpha_i - \beta_i)\cdot \sgn(x-y)$ to $(x'-y')$ and depends only on values of $\alpha_j - \beta_j \in \{-1,0,1\}$ for $i \le j < i+b$ in the following way:
\begin{itemize}
\item If, for every $i \le j < i+b$ we have $\alpha_j = \beta_j$, then $\gamma_i = 0$.
\item Otherwise, let $j'$ be the largest $j$ such that $i \le j < i+b$ and $\alpha_j \not= \beta_j$. If $\alpha_{j'} - \beta_{j'} = 1$, then the local estimation is that $x' > y'$ and so $\gamma_i = \alpha_i - \beta_i$, and otherwise $\gamma_i = -1 \cdot (\alpha_i - \beta_i)$.
\end{itemize}

Consider $c = \max\{i : \alpha_i \not= \beta_i\}$ and $d = \max\{i : 2^i \le (x'-y')\}$, that is $c$ is the position of the highest bit on which $x'$ and $y'$ differ, and $d$ is the position of the highest bit of $x'-y'$. In general, $c \ge d$, and we say that pair $x',y'$ is $t$-\emph{bad}, if $c-d=t$.

We first observe that for a $x,y$ pair to be at least $t$-bad, a following condition must be met: $\lfloor x'/2^{d+t} \rfloor \not= \lfloor y'/2^{d+t} \rfloor$. Since $\Delta$ is chosen uniformly at random from a large enough range of integers, there is 
$$\sum_{\tau \ge t} \Pr(x',y' \text{ is }\tau\text{-bad }\big|\ x,y) \le |x-y|/2^{d+t} \le 2^{-t+1}.$$
 We also observe following:  for any pair $x',y'$, in $C$, all the coefficients $\gamma_c, \gamma_{c-1}, \ldots, \gamma_{c-b+1}$ are computed correctly, since for any $j$ such that $c \ge j \ge c-b+1$ there is $j' = c$, and then $\gamma_j = (\alpha_j - \beta_j) \cdot \sgn(x-y)$. Therefore 
$$\big|C - \left|x'-y'\right| \big| = \left|\sum_{i \le c-b} 2^i (\gamma_i - (\alpha_i - \beta_i)\cdot\sgn(x-y)) \right| \le  \sum_{i \le c-b} 2 \cdot 2^i < 2 \cdot 2^{c-b+1}.$$
If a pair $x',y'$ is $t$-bad, it immediately follows that the absolute error of estimation is at most $2^{c-b+2} = 2^{d+t-b+2} \le |x'-y'| 2^{t+2} \delta $.

We now estimate expected error in estimation based on choice of $\Delta$. If a particular pair $x,y$ is $t$-bad, then $t \le 1+\lceil \log M \rceil$. Using the previous observations, we have
\begin{align*}
\mathbb{E}\Big[  \big|C - \left|x'-y'\right|\big|\ \ \Big|\ \  x,y\Big] &= \sum_{t} \Pr(x',y' \text{ is }t\text{-bad }\big|\ x,y) \cdot  \mathbb{E}\Big[\big|C - \left|x'-y'\right| \big|\ \Big|\ x',y' \text{ is }t\text{-bad } \Big] \\ 
&\le \sum_{t=0}^{1+\lceil \log M \rceil} 2^{-t+1} |x-y|  2^{t+2} \delta = (3+ \log M ) 8 \delta |x-y| = \frac{\varepsilon}{3} |x-y|.
\end{align*}

By linearity of expectation $\mathbb{E}\Big[ \big|\score_\varepsilon[i] - \score[i]\big| \Big] \le \frac{\varepsilon}{3} \score[i],$
and by Markov's inequality the claim follows.
\end{proof}

Now, a standard amplification technique applies: it is enough to repeat Algorithm~\ref{alg:approximate} independently $p$ times and take the median value from $\score_{\varepsilon}^{(1)}[i], \score_{\varepsilon}^{(2)}[i], \ldots, \score_{\varepsilon}^{(p)}[i]$ as the final estimate $\widehat{\score}_{\varepsilon}[i]$. Taking $p = \Theta(\log n)$ to be large enough makes the final estimate good with high probability, and by the union bound whole $\widehat{\score}_\varepsilon$ is a good estimate of $\score$.

The complexity of Algorithm~\ref{alg:approximate} is dominated by $\KwWeighted$ being invoked $\bigo(\log n)$ times on alphabet of size $2^b = \Theta(\varepsilon^{-1} \log n)$. Each such invocation takes $\bigo(2^b n \log m) = \bigo(\varepsilon^{-1} n \log n \log m)$, and Algorithm~\ref{alg:approximate} takes $\bigo(\varepsilon^{-1} n \log^2 n \log m)$ time and the total time for computing $(1\pm\epsilon)$-approximation is $\bigo(\varepsilon^{-1} n \log^3 n \log m)$.

\section{Proofs of Theorem~\ref{th:approximated} and Theorem~\ref{th:ham_l1_reduction}}
In our construction of algorithm for $k$-approximated text-to-pattern $L_1$ distance, we make extensive use of techniques used in \cite{DBLP:journals/corr/GawrychowskiU17} for solving $k$-approximated Hamming distances.
More precisely, we need two components:

\begin{corollary}[\cite{CliffordFPSS16},\cite{DBLP:journals/corr/GawrychowskiU17}]
\label{cor:approx_to_bounded_rle_ham}
The $k$-approximated text-to-pattern Hamming distance problem reduces in $\widetilde\bigo(n)$ time to $\bigo(n/m)$ instances of  text-to-pattern  Hamming distance on $\bigo(k)$-RLE inputs of length $\bigo(m)$.
\end{corollary}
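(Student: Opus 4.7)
The plan is to follow the alignment-filtering plus kernelization paradigm of \cite{CliffordFPSS16}, as streamlined in \cite{DBLP:journals/corr/GawrychowskiU17}. First I would cut $T$ into $\bigo(n/m)$ overlapping blocks of length $\bigo(m)$; then it suffices to describe the reduction for a single block (i.e.\ the case $n = \Theta(m)$), with the preprocessing of suffix structures on $T$ and $P$ absorbed into the outer $\widetilde\bigo(n)$ budget. Within a block I would dichotomize on whether $P$ has a short (approximate) period.

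In the \emph{aperiodic} case, where no period of length at most $\Theta(m/k)$ exists, a standard argument shows that any two alignments whose Hamming distance to $P$ is at most $k$ must be at distance $\Omega(m/k)$ from each other, so after running an approximate-Hamming filter there remain only $\bigo(k)$ candidate alignments in the block. Each candidate is verified exactly in $\bigo(k)$ time via repeated LCP queries in Landau--Vishkin style, giving $\widetilde\bigo(m)$ per block and producing no RLE subproblem at all. This case is therefore handled directly inside the $\widetilde\bigo(n)$ budget.

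In the \emph{periodic} case, where $P$ has an approximate period $\rho$ with $\bigo(k)$ mismatches, every $m$-window of $T$ at Hamming distance $\le k$ to $P$ must itself be at distance $\bigo(k)$ from being $\rho$-periodic; moreover the candidate alignments all lie on a single arithmetic progression of step $\rho$. I would then kernelize: replace each maximal run of the period of $P$ by a fresh super-letter and do the symmetric substitution in the relevant window of $T$, using additional fresh letters for the $\bigo(k)$ exceptional positions on either side. The rewritten text and pattern have length $\bigo(m)$ and RLE size $\bigo(k)$, and by construction the Hamming distance between the original window and $P$ at any shift in the relevant progression equals the Hamming distance between the kernelized strings at the corresponding shift. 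This yields one $\bigo(k)$-RLE Hamming instance per block, i.e.\ $\bigo(n/m)$ instances in total.

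The main obstacle I expect is the bookkeeping in the kernelization step: the super-letter substitution must be defined so that the Hamming distance is preserved \emph{exactly} at every valid original alignment and so that the two rewritten strings have RLE of size truly $\bigo(k)$ rather than $\bigo(m/\rho + k)$. Getting this right requires carefully aligning the periodic cores of $P$ and the candidate windows of $T$ (so runs in both sides start at matching offsets modulo $\rho$) and handling the $\bigo(k)$ boundary and exception positions with dedicated letters so that both sides compress to $\bigo(k)$ runs. Everything else is routine period-finding, LCP queries, and the approximate-Hamming subroutine.
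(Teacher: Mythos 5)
Your overall structure---decompose into $\bigo(n/m)$ blocks, dichotomize on whether $P$ has a small approximate period, filter the aperiodic case and kernelize the periodic case---is exactly the paradigm of \cite{CliffordFPSS16} as streamlined in \cite{DBLP:journals/corr/GawrychowskiU17}. However the periodicity threshold you chose is inverted: it should be period \emph{length} $\Theta(k)$, not $\Theta(m/k)$. With your threshold the aperiodic branch leaves $\bigo(k)$ surviving candidates, each verified in $\bigo(k)$ time by kangaroo jumps, so $\bigo(k^2)$ per block and $\bigo(k^2 n/m)$ total; this is not $\widetilde\bigo(n)$ once $k > \sqrt{m}$, so the claimed $\widetilde\bigo(m)$ per block does not follow. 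With the correct threshold (decide, via an approximate self-overlap computation, whether the minimal $\bigo(k)$-period $\ell$ is $\geq k$ or $\leq \bigo(k)$), Lemma~\ref{lem:filtration} gives $\Omega(k)$ separation between candidates in the aperiodic case, hence $\bigo(m/k)$ survivors and $\bigo(m)$ verification per block for every $k$.

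The threshold also drives the periodic branch. The kernelization in \cite{DBLP:journals/corr/GawrychowskiU17} is a \emph{rearrangement}: after trimming $T$ to a window $T'$ that contains all candidate alignments, positions of $P$ and of $T'$ are grouped by residue modulo $\ell$ (with wildcard padding), producing $P^\star,T^\star$ of length $\bigo(m)$ with $\bigo(\ell+k)$ runs---so you again need $\ell=\bigo(k)$ to get the stated $\bigo(k)$-RLE bound---and, crucially, a single fixed permutation under which every aligned letter pair at every relevant original shift reappears as an aligned pair in the new instance, with all spurious alignments involving a wildcard. Your super-letter substitution, which collapses each period occurrence into one fresh symbol, only corresponds to shifts that are multiples of $\rho$ and forfeits this permutation property, so it does not on its own yield a text-to-pattern Hamming instance recovering all shifts of interest. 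You correctly identify the RLE bookkeeping as the obstacle, but the way out is not to tune the substitution; it is to replace it with the residue-class rearrangement and set $\ell=\bigo(k)$.
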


\begin{corollary}[\cite{DBLP:journals/corr/GawrychowskiU17}]
\label{cor:run_complexity}
Text-to-pattern Hamming distance on $k$-RLE inputs of length $\bigo(m)$ is computed exactly in time $\bigo(m+k \sqrt{m \log m})$ time.
\end{corollary}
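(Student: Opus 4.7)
The plan is to generalize Abrahamson's alphabet-splitting algorithm, tuning the split by the number of \emph{runs} (not occurrences) that each character contributes. Since both $P$ and $T$ are $k$-RLE, at most $O(k)$ distinct characters appear; writing $r_P(c)$ and $r_T(c)$ for the number of runs of character $c$ in $P$ and $T$, we have $\sum_c (r_P(c)+r_T(c)) = O(k)$. Matches at alignment $i$ decompose as $M[i] = \sum_c (\chi_c(P^R) \star \chi_c(T))[i]$, so it suffices to compute this sum of $O(k)$ convolutions of $0/1$-vectors and then output $m - M[i]$ at each position.

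I would fix the threshold $\tau = \sqrt{m \log m}$ and split the characters into \emph{heavy} ones with $r_P(c)+r_T(c) > \tau$ and \emph{light} ones otherwise. Heavy characters number at most $O(k/\tau)$, so we can afford one full length-$O(m)$ FFT per heavy character for a total cost of $O(k m \log m / \tau) = O(k\sqrt{m \log m})$.

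For the light characters, I would exploit the identity $(f\star g)'' = f' \star g'$ together with the fact that $\chi_c(P)$ and $\chi_c(T)$ flip value at only $O(r_P(c))$ and $O(r_T(c))$ positions respectively, so their discrete derivatives are sparse $\pm 1$-vectors supported at run boundaries. For each light $c$, I would convolve these two sparse derivatives by brute force in time $O(r_P(c) \cdot r_T(c))$, and accumulate the resulting $\pm 1$ impulses into a single shared second-difference array of length $O(m)$. A single double integration at the end then recovers the combined light-character contribution in $O(m)$ time. Because every light character satisfies $\min(r_P(c), r_T(c)) \le \tau$, one has $r_P(c) \cdot r_T(c) \le \tau \cdot (r_P(c)+r_T(c))$, so the total brute-force work is at most $\tau \cdot O(k) = O(k \sqrt{m \log m})$. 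Summing the heavy and light contributions and subtracting from $m$ produces the distance array within the claimed budget.

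The main obstacle is the bookkeeping for the sparse derivative-domain step: one must fix consistent alignment and reversal conventions so that impulses from different characters live in a common coordinate system, handle the boundary corrections at positions $0$ and $m-1$ (where forward differences lose information), and verify that the double integration is genuinely linear time once all light contributions have been dropped into the accumulator. Once this plumbing is in place, the choice $\tau = \sqrt{m \log m}$ is exactly the one that balances the heavy FFT cost against the light sparse-convolution cost via the global constraint $\sum_c (r_P(c)+r_T(c)) = O(k)$, yielding the desired $O(m + k \sqrt{m \log m})$ running time.
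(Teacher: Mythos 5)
Your argument is correct and is essentially the proof given in the cited reference \cite{DBLP:journals/corr/GawrychowskiU17} (the present paper only imports this statement without proving it): the heavy/light split on per-character run counts with threshold $\tau=\sqrt{m\log m}$, one FFT per heavy character, and brute-force convolution of the sparse run-boundary impulses in the second-difference domain for light characters, with the balancing computation $O(km\log m/\tau)=O(k\sqrt{m\log m})$ on one side and $\tau\cdot O(k)$ on the other, is exactly the intended argument. The bookkeeping you flag is routine, since every per-character correlation is supported in a single window of length $O(m)$, so all impulses can be dropped into one shared accumulator and recovered by a single $O(m)$ double prefix summation.
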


We also need a following definition, as in \cite{DBLP:journals/corr/GawrychowskiU17} and \cite{CliffordFPSS16}, that an integer $\pi>0$ to is a $x$-period
of a string $S[1,m]$, if $\text{Ham}(S[\pi+1,m],S[1,m-\pi]) \le x$.

\begin{lemma}[Fact 3.1 in \cite{CliffordFPSS16}]
\label{lem:filtration}
If the minimal $2x$-period of the pattern is $\ell$, then for any two distinct $m$-substrings of text with Hamming distance to pattern at most $x$, their starting positions are at distance at least $\ell$.
\end{lemma}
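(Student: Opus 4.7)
The plan is a proof by contradiction. Suppose there exist two distinct starting positions $i < j$ with $d := j - i < \ell$ such that both $m$-substrings $T[i+1,i+m]$ and $T[j+1,j+m]$ have Hamming distance at most $x$ to $P$. The goal is to show that then $d$ itself is a $2x$-period of $P$, contradicting the minimality of $\ell$.

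The key observation is that the two alignments overlap on the text region $T[j+1, i+m]$, which has length $m - d$. For every index $k \in \{1, \ldots, m-d\}$ the same text character $T[i+d+k] = T[j+k]$ is compared against $P[k+d]$ in the first alignment and against $P[k]$ in the second. Hence, whenever $P[k] \neq P[k+d]$, at least one of the two alignments must incur a mismatch at this text position. Charging each such discrepancy of $P$ against itself (shifted by $d$) to a genuine mismatch in alignment one or alignment two, I would obtain
\[
\text{Ham}\bigl(P[1, m-d], P[d+1, m]\bigr) \;\le\; (\text{mismatches of alignment 1}) + (\text{mismatches of alignment 2}) \;\le\; x + x \;=\; 2x.
\]
This exactly says that $d$ is a $2x$-period of $P$.

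Since $d = j - i < \ell$ and $d > 0$ (the substrings are distinct), this contradicts the assumption that $\ell$ is the minimal $2x$-period of $P$. Therefore any two starting positions of $m$-substrings with Hamming distance at most $x$ to $P$ must differ by at least $\ell$, proving the lemma.

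There is no genuine obstacle here; the argument is a clean pigeonhole/overlap combination. The only point to be careful about is the direction of charging: each mismatch between $P$ and its shift must be witnessed by at least one alignment's mismatch, not double-counted, and the bound $x+x = 2x$ is obtained by the union bound on sets of mismatch positions rather than their intersection. Once that is handled, the contradiction with the minimality of $\ell$ is immediate.
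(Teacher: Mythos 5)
Your proof is correct and is exactly the standard overlap/charging argument behind Fact~3.1 of \cite{CliffordFPSS16}, which the paper cites without reproving: each self-mismatch $P[k]\neq P[k+d]$ in the shifted pattern is witnessed by a mismatch of at least one of the two alignments at the shared text character, giving $\mathrm{Ham}(P[d+1,m],P[1,m-d])\le 2x$ and contradicting the minimality of $\ell$. The index bookkeeping and the union-bound direction of the charging are both handled correctly.
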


We start with a $L_1$ version of Corollary~\ref{cor:approx_to_bounded_rle_ham}.

\begin{theorem}
\label{th:approx_to_bounded_rle_l1}
The $k$-approximated text-to-pattern $L_1$ distance problem reduces in $\bigo(n \log^3m)$ time to $\bigo(n/m)$ instances of exact $L_1$ text-to-pattern $L_1$ distance with $\bigo(k)$-RLE inputs of length $\bigo(m)$, where both pattern and text might have wildcards.
\end{theorem}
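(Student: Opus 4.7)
The plan is to adapt the filter-and-kernelize blueprint that proves Corollary~\ref{cor:approx_to_bounded_rle_ham} in the Hamming setting, substituting our $L_1$ approximation (Theorem~\ref{th:approximate}) at the filtering step and exploiting the inequality $L_1(X,Y) \ge \mathrm{Ham}(X,Y)$ for integer strings so that the Hamming periodicity machinery of \cite{DBLP:journals/corr/GawrychowskiU17} transfers to the $L_1$ setting essentially unchanged.

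First I would apply Theorem~\ref{th:approximate} with constant $\varepsilon$ (say $1/3$) to compute an estimate $\widetilde{\score}$ in $\widetilde\bigo(n)$ time. Every alignment $i$ with $\widetilde{\score}[i] > 4k/3$ satisfies $\score[i] > k$ and is output as $\infty$; every remaining alignment satisfies $\score[i] = \bigo(k)$ and, because the alphabet is integer-valued, also $\mathrm{Ham}(T[i+1\,..\,i+m], P) = \bigo(k)$. Then cover the text with $\bigo(n/m)$ overlapping windows of length $2m$ and process each independently.

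Inside a window I would compute the minimal $\bigo(k)$-Hamming-period $\pi$ of $P$ in $\widetilde\bigo(m)$ time and branch on its size. In the sparse case ($\pi$ above the threshold $m/\Theta(k)$), Lemma~\ref{lem:filtration} forces any two surviving alignments in the window to lie at distance at least $\pi$ apart, leaving only $\bigo(k)$ of them; these are packaged into a single instance by extracting the relevant $m$-substrings, padding with wildcards, and aligning against $P$, yielding a text of length $\bigo(m)$ that is $\bigo(k)$-RLE. In the periodic case ($\pi \le m/\Theta(k)$), $P$ is $\pi$-periodic except at $\bigo(k)$ positions, and the stable permutation $\sigma$ that reorders positions $1,\ldots,m$ by residue modulo $\pi$ makes $\sigma(P)$ constant within each residue class except at the $\bigo(k)$ period-mismatches, hence $\bigo(\pi + k) = \bigo(k)$-RLE. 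Applying the same position-based permutation to the text window, separately for each of the $\pi$ possible alignment offsets modulo $\pi$, produces the text side of each RLE instance; positions lying outside the current alignment become wildcards, and since each surviving $m$-substring of text is itself near-periodic with $\bigo(k)$ Hamming errors, the rearranged text inherits $\bigo(k)$-RLE. Over the $\pi$ offset classes this produces $\bigo(1)$ RLE instances per window, $\bigo(n/m)$ overall.

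The main obstacle is the periodic case: one must argue that the rearrangement produces $\bigo(k)$-RLE text \emph{simultaneously} for all surviving alignments sharing a common offset class, not just one at a time, which is exactly why both pattern and text of the generated instances must be permitted to contain wildcards (they mask positions that would otherwise create RLE breakpoints specific to a single alignment). Correctness of the $L_1$ version then follows immediately from the identity $L_1(\sigma(X),\sigma(Y)) = L_1(X,Y)$ under a common position-permutation $\sigma$, and the overall $\bigo(n \log^3 m)$ time bound is dominated by the initial approximate filter together with $\widetilde\bigo(m)$ preprocessing per window.
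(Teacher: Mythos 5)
Your high-level skeleton (approximate filter, period dichotomy, kernelization with wildcards) matches the paper's, but two concrete steps are wrong in ways that break the argument.

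First, your period threshold is off, and this is not cosmetic. You split on whether the period $\pi$ exceeds $m/\Theta(k)$, but the kernelization step requires the period itself to be $\bigo(k)$: after reordering by residue modulo $\pi$, the rearranged pattern has $\bigo(\pi + k)$ runs ($\pi$ residue classes plus $\bigo(k)$ periodicity breaks), so $\bigo(k)$-RLE is obtained only when $\pi = \bigo(k)$, not when $\pi \le m/\Theta(k)$ (which can be much larger than $k$ for small $k$). The paper's dichotomy is ``every $4k$-period is at least $k$'' versus ``some $8k$-period is at most $k$'' — i.e.\ the pivot is $k$. With your threshold, the periodic branch would produce instances with $\Theta(m/k)$ runs, defeating the whole reduction.

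Second, your handling of the sparse branch is unjustified. You claim the $\bigo(k)$ surviving alignments can be ``packaged into a single instance by extracting the relevant $m$-substrings, padding with wildcards, \ldots\ yielding a text of length $\bigo(m)$ that is $\bigo(k)$-RLE.'' Nothing forces the text at surviving positions to be compressible: the $2m$-window is an arbitrary integer string, the surviving $m$-substrings may collectively cover essentially the entire window, and replacing the uncovered positions with wildcards does not reduce the number of runs below $\Theta(m)$. This is precisely why the paper does \emph{not} reduce the sparse case to an RLE instance at all. Instead it filters with Karloff's $2$-approximate Hamming matcher (legitimate because $L_1 \ge \text{Ham}$), obtains $\bigo(m/k)$ candidate positions per window by Lemma~\ref{lem:filtration}, and then verifies each one directly with the Landau--Vishkin ``kangaroo jump'' LCP technique in $\bigo(k)$ time per position, accumulating the $L_1$ contribution at each discovered mismatch. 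That verification step, which takes $\bigo(m)$ per window and $\bigo(n)$ overall, is entirely absent from your sketch, and without it the sparse case does not close.

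A smaller point: using Theorem~\ref{th:approximate} with constant $\varepsilon$ as the filter costs $\bigo(n\log^3 n\log m)$, which overshoots the stated $\bigo(n\log^3 m)$. Since $L_1 \ge \text{Ham}$, filtering by approximate Hamming distance (Karloff's algorithm, at $\bigo(n\log^3 m)$ for constant $\varepsilon$) is both cheaper and sufficient, and it is what the proof actually uses. This also avoids any appearance of bootstrapping off the other main theorem of the paper.

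Your observation that a position permutation preserves $L_1$ and that special characters must be wildcards so that $L_1$ (unlike Hamming, where any single fresh padding symbol works) is preserved is correct and is indeed the $L_1$-specific ingredient the paper highlights in the periodic branch.
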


\begin{proof}
By a standard trick, it is enough to consider case where $T$ is of length $2m$, as any other case can be reduced to $\lceil n/m \rceil$ instances of this type. We proceed by showing key features of reduction in Corollary~\ref{cor:approx_to_bounded_rle_ham}.

Observe that for any words over integer alphabet $X,Y$, there is $L_1(X,Y) \ge \text{Ham}(X,Y)$. This makes any technique eliminating alignments with too large Hamming distance correct for filtering $L_1$ distances as well. 
We can determine easily minimal $\bigo(k)$-period of the pattern. As in \cite{DBLP:journals/corr/GawrychowskiU17}, we run Karloff's approximate Hamming distances algorithm \cite{Karloff93} matching pattern against pattern, with precision $1+\varepsilon=2$. This takes $\bigo(m \log^3 m)$ time, and we end up with one of two cases:
\begin{itemize}
\item every $4k$-period of the pattern is at least $k$, or
\item there is  a $8k$-period of the pattern that is at most $k$.
\end{itemize}

\paragraph{No small $4k$-period.}
We run Karloff's algorithm on pattern against the text, with $1+\varepsilon=2$. We then filter out all alignments where there were more than $2k$ reported mismatches. By Lemma~\ref{lem:filtration}, there are $\bigo(m/k)$ such alignments. By the relation between $L_1$ and Hamming distances, all discarded alignments were safe to do so for $L_1$ distances as well. Then, we test every such position using the ``kangaroo jumps''
technique of Landau and Vishkin~\cite{LandauV86}, using $\bigo(k)$ constant-time operations per position,
in total $\bigo(m)$ time. The only modification in this part from \cite{DBLP:journals/corr/GawrychowskiU17} approach is that with each found mismatch through, we account for the $L_1$ score it generates.

\paragraph{Small $8k$-period.}
Let $\ell$ be such a small $8k$-period. The initial approach from \cite{DBLP:journals/corr/GawrychowskiU17} in this case summarizes as follows. First, a subword $T'$ of $T$ is located, that contains all alignments of $P$ that match with Hamming distance at most $k$. Thus for our purposes it contains all alignments with $L_1$ distance to $P$ at most $k$ as well (c.f. Lemma 2.4 in \cite{DBLP:journals/corr/GawrychowskiU17}). Then both $P$ and $T'$ are padded with \emph{special characters}, and are subsequently rearranged into $P^\star$ and $T^\star$ of following properties (c.f. Lemma 2.5 in \cite{DBLP:journals/corr/GawrychowskiU17}):
\begin{itemize}
\item Both $P^\star$ and $T^\star$ are of $\bigo(m)$ length.
\item Both $P^\star$ and $T^\star$ have $\bigo(k)$ runs.
\item There is a map $i \to j$ such that if pair of letters is aligned between $P$ and $T'[i,i+|P|-1]$, then there is corresponding aligned pair of identical letters in $P^\star$ and $T^\star[j,j+|P^\star|-1]$. Moreover, any additional aligned pair of letters in $P^\star$ and $T^\star$ involve at least one special letter. The map depends only on values of $|P|$, $|T'|$ and $\ell$.
\end{itemize}

The last property, coupled with invariance of Hamming distance under permuting of input (as long as we preserve alignments) means that text-to-pattern Hamming distances between $P^\star$ and $T^\star$ encode, under constant additive term and reordering, text-to-pattern Hamming distances between $P$ and $T'$, and thus all small between $P$ and $T$. We use the same transformation and claim that if all special characters used were wildcards $\ast$ (symbols that have $L_1$ distance 0 to every other character), the $L_1$ distance is preserved.
\end{proof}

Now, instead of building explicit algorithm for computing $L_1$ distance on bounded RLE instances, we make use of existing algorithm for Hamming distances and a reduction that preserves bounds on RLE.

\begin{corollary}[c.f. Theorem 2.1, Theorem 2.2 and Lemma A.1 in \cite{DBLP:journals/corr/abs-1711-03887}]
\label{cor:reduction}
For any $M\ge0$, there is a linearity preserving reduction from $L_1$ distance between integers from $[M]$ to $\bigo(\log^2 M)$ instances of Hamming distance. There is a converse reduction from Hamming distance to $\bigo(1)$ instances of $L_1$ distance. Those reductions allows for wildcards in the input and produces wildcard-less instances on the output.
\end{corollary}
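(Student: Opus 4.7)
Since the corollary is essentially an off-the-shelf assembly of three results from \cite{DBLP:journals/corr/abs-1711-03887} (Theorem 2.1, Theorem 2.2, and Lemma A.1), my plan is to invoke each of them in turn rather than re-derive them from scratch. For the forward direction (reducing $L_1$ over $[M]$ to $O(\log^2 M)$ Hamming instances), I would begin from the elementary threshold identity
\[
|x - y| \;=\; \sum_{t=0}^{M-1} \mathbf{1}\!\big[(x \le t) \oplus (y \le t)\big],
\]
which already exhibits a linearity-preserving reduction of size $M$ with $f_t(z)=g_t(z)=\mathbf{1}[z \le t]$. Theorem 2.1 then compresses this to $O(\log^2 M)$ summands by grouping thresholds according to the binary structure of $[M]$: for each of the $O(\log M)$ bit positions one tracks the contribution to $|x-y|$ conditioned on the higher-order prefix, and there are $O(\log M)$ relevant prefix patterns to distinguish, giving the claimed quadratic-in-$\log M$ bound.

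For the reverse direction (reducing Hamming to $O(1)$ $L_1$ instances), I would appeal directly to Theorem 2.2, whose linearity-preserving reduction writes $\mathbf{1}[x \neq y]$ as a constant-size integer combination of differences $|f_i(x) - g_i(y)|$ via a carefully chosen integer embedding of the alphabet that makes equal pairs contribute zero while unequal pairs contribute a bounded nonzero amount that can be renormalized to one.

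For the wildcard clause, I would invoke Lemma A.1 of the same paper, which supplies a generic mechanism to instrument any linearity-preserving reduction so that wildcards on the input side are handled transparently and do not appear in the intermediate instances. Concretely, wildcard positions are marked once, arbitrary values are substituted there, and the spurious contribution is subtracted off using a constant number of auxiliary convolutions that count wildcard--letter alignments. Applying this to both reductions above yields the wildcard-tolerant versions asserted in the corollary, at only a constant-factor overhead.

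There is no real technical obstacle since each of the three ingredients is already proven in the cited work; the only point to verify is that the wildcard augmentation of Lemma A.1 does not blow up the sizes $O(\log^2 M)$ and $O(1)$. This is immediate because the correction convolutions introduced by the lemma are independent of $M$ and add only constantly many convolutions to each direction, so both bounds are preserved up to the hidden constants.
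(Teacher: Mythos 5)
The paper offers no proof of this corollary; it is stated as a direct consequence of the three cited external results (Theorem~2.1, Theorem~2.2, and Lemma~A.1 of the referenced work), which is exactly the approach you take. Your sketches of what each of those results contains — the threshold identity $|x-y| = \sum_t \mathbf{1}\big[(x\le t)\oplus(y\le t)\big]$ compressed by bit-level grouping, the $O(1)$-size integer embedding for the converse, and the subtractive wildcard-removal mechanism — are plausible reconstructions, but they are not verified against the source and serve as context rather than as a self-contained argument, which is consistent with how the paper itself treats the statement.
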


We now have enough tools to construct $k$-approximated $L_1$ distance algorithm of desired runtime. 
\begin{proof}[Proof of Theorem~\ref{th:approximated}]
By Theorem~\ref{th:approx_to_bounded_rle_l1}, we reduce the input instance to $\bigo(n/m)$ instances of bounded RLE $L_1$ distances. By fixing $M = \text{poly}(n)$ in Corollary~\ref{cor:reduction} each of those reduces to $\bigo(\log^2 n)$ instances of text-to-pattern Hamming distance. Additionally, the reduction does not create any new runs, so the output instances have the same bound on RLE, so the Corollary~\ref{cor:run_complexity} applies. Final runtime is $\bigo((m+ k \sqrt{m \log m}) \cdot \log^2 n + n \log^3m)$.
\end{proof}

To finish the full chain of reductions, we also show a following.
\begin{lemma}
\label{lem:rle_to_approximated}
Text-to-pattern Hamming distance on $k$-RLE inputs with text and pattern of length $\bigo(m)$ reduces to $\bigo(1)$ instances of $2k$-approximated Hamming distance on inputs of length $\bigo(m)$.
\end{lemma}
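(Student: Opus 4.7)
The plan is to exploit sparsity: for $k$-RLE strings $P$ and $T$, the derivative strings $\Delta P[j] := P[j] - P[j-1]$ and $\Delta T[j] := T[j] - T[j-1]$ (setting $P[0] = T[0] = 0$) have at most $O(k)$ non-zero entries each. Therefore, at every alignment $i$, $\mathrm{Ham}(\Delta P, \Delta T[i+1..i+m]) \le 2k$, so a single call to a $2k$-approximated Hamming oracle on the pair $(\Delta T, \Delta P)$ returns the exact distance at every alignment.

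The key identity linking the derivative computation back to the original is the following. Let $D_i[j] := P[j] - T[i+j]$. Its discrete derivative is exactly $\Delta P[j] - \Delta T[i+j]$, so $D_i$ is piecewise constant with at most $2k+1$ pieces, the breakpoints being precisely the positions where the derivative-oracle reports a mismatch. The exact Hamming distance $\mathrm{Ham}(P, T[i+..])$ is then the total length of the non-zero pieces of $D_i$. From the derivative-oracle output alone, we learn the count and (implicitly) the locations of breakpoints of $D_i$, but we do not directly learn which pieces are zero.

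To recover the zero/non-zero status of each piece, I would introduce one more derived string pair — e.g.\ encodings in which each breakpoint of $P$ (resp.\ $T$) carries its local value while all other positions are filled by a fixed dummy symbol, or equivalently the pair $(P, T)$ itself restricted to a $\star$-masked version supported on boundary positions. The Hamming distance of this auxiliary pair is again $O(k)$ at every alignment, so a second $2k$-approximated call returns exact information about how the breakpoint values align. Combined with the $O(k)$-sized run-structure of $P$ and $T$, which is computable by the reduction itself in $O(n+m)$ preprocessing time, this pins down the value of $D_i$ on each piece, and an $O(k)$-per-alignment sweep yields $\mathrm{Ham}(P,T[i+..])$ in $O(n)$ total post-processing.

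The main obstacle is packaging the reconstruction into a truly $O(1)$ number of oracle calls, rather than something like $O(\log m)$ calls needed to transmit enough bits per alignment. The crucial point is that the reduction has the $k$-RLE structures of $P$ and $T$ available \emph{for free}; it only needs the oracle to supply the sparse, $O(k)$-bounded combinatorial information about how the two run-structures interact at each alignment. The cleanest danger points are positions $i$ where a boundary of $P$ and a boundary of $T$ collide, and alignments near the ends of $T$ where fewer than $k$ runs overlap; I would check those cases explicitly on worked examples before writing out the formal reduction.
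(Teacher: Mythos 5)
Your proposal takes a genuinely different route from the paper, but it has a gap — one you partly flag yourself — and I do not think it can be closed within the stated budget.

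You differentiate the \emph{input} strings, forming $\Delta P$, $\Delta T$, and reason about the piecewise-constant difference array $D_i[j]=P[j]-T[i+j]$. The paper instead differentiates the \emph{output} array $\score$: it observes that it suffices to compute $D^2\score$ together with two initial values of $\score$, and that each pair of same-letter runs in $T$ and $P$ contributes a trapezoid to the match count whose second derivative is supported on exactly four offsets determined by the two run endpoints. Computing $D^2\score$ therefore reduces to four ``sparse'' Hamming instances whose non-wildcard positions are exactly the $\bigo(k)$ run boundaries of $T$ and $P$; a binary convolution inverts match counts vs.\ mismatch counts; and the sparse (wildcard) instances are reduced to wildcard-free instances with at most $2k$ non-zero characters via the $(T_1,P_1)$, $(T_2,P_2)$ pair $\text{Ham}(T,P)=\text{Ham}(T_1,P_1)-\text{Ham}(T_2,P_2)$, at which point a $2k$-approximated oracle returns the exact value. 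This is $\bigo(1)$ oracle calls and $\bigo(m)$ post-processing, full stop.

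Your approach stalls at the step you call the ``main obstacle.'' The quantity $\text{Ham}(\Delta P, \Delta T[i{+}\cdot])$ is the number of breakpoints of $D_i$, which is \emph{not} a sufficient statistic for $\text{Ham}(P, T[i{+}\cdot])$: the Hamming distance is the total \emph{length} of the non-zero pieces of $D_i$, and the breakpoint count carries no information about piece lengths or about which pieces are zero. Your proposed fix (a second derived pair carrying breakpoint values) tells you how breakpoints of $P$ align with breakpoints of $T$, but still not the value $P[j]-T[i+j]$ on a piece whose breakpoint comes from only one of the two strings: that value mixes a run value of $P$ with a run value of $T$ lying in the \emph{interior} of the other string's run, and recovering it for every (piece, alignment) pair is essentially the problem you started with. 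Moreover, even granting the reconstruction, the ``$\bigo(k)$-per-alignment sweep'' is $\bigo(km)$ over the $\bigo(m)$ alignments of an $\bigo(m)$-length text, not $\bigo(m)$; this alone would wipe out the $\bigo(m+k\sqrt{m\log m})$ bound that Lemma~\ref{lem:rle_to_approximated} is supposed to feed into. The second-derivative-of-the-\emph{output} trick is precisely what lets the paper avoid any per-alignment work proportional to $k$, and it is the missing idea here.

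Two smaller points worth noting. First, taking $\Delta T$ globally and then windowing does not equal windowing $T$ and then differentiating — the first column of each window is off — so even the ``breakpoint count'' oracle call is not computing quite what you intend without a correction at $j=1$. Second, a $2k$-approximated oracle reports $\infty$ above threshold $2k$; your claim that all derivative alignments stay within budget relies on $\Delta P$ and $\Delta T$ each having at most $k$ non-zeros, which is correct for $k$-RLE inputs, but the paper's wildcard-elimination step ($T_1/T_2$) is still needed to turn a wildcard Hamming instance into two ordinary ones, and you would need an analogous device before your oracle call is legitimate on an integer-alphabet instance.
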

\begin{proof}
We proceed in a manner similar to the \cite{DBLP:journals/corr/GawrychowskiU17}. We observe that it is enough to compute the second discrete derivate of the output array $\score$, that is $D^2 \score$ defined as $(D^2 \score)[i] = \score[i+2]-2\score[i+1]+\score[i]$, since having $D^2 \score$ and two initial values of $\score$ (later can be compute naively in time $\bigo(m)$) is enough to recover all of $\score$. For any two blocks $t_ut_{u+1} \ldots t_{v-1}t_{v}$ and $p_yp_{y+1} \ldots p_{z-1}p_z$ of the same letter, $D^2 \score$ needs to be updated in only $4$ places, that is $D^2 \score[u-z] += 1$, $D^2 \score[v-z+1] -= 1$, $D^2 \score[u-y+1] -= 1$ and $D^2\score[v-y+2]$. We now explain how to deal with the first kind of updates, with the three following being done in an analogous manner.

We first reduce to a problem of \emph{$k$-sparse text-to-pattern Hamming distance}, where text and pattern are of length $\bigo(m)$ and have each at most $k$ regular characters, with every other character being wildcard $\ast$ (special character having 0 distance to every other character).
We construct sparse instance as follows: for every position $t_u$ in $T$ that starts a block, we set $T_{\text{sparse}}[u] = t_u$, and similarly in pattern for a position $p_y$ (that starts a block), we set $P_{\text{sparse}}[y] = p_y$. Observe, that if $t_u \not= p_y$, then in the answer there is $\score_{\text{sparse}}[u-y] +=1$, and if $t_u = p_y$ then $\score_{\text{sparse}}$ remains unchanged (answer counts mismatches, while we want to count matches). To invert the answer, we create $T_{\text{bin}}$ such that $T_{\text{bin}}[i] = 1$ iff $T_{\text{sparse}}[i] \not= \ast$ and $T_{\text{bin}}[i] = 0$ otherwise, and $P_{\text{bin}}$ in an analogous manner. Then a single convolution of $T_{\text{bin}}$ with $P_{\text{bin}}$ counts for every alignment total number of non-special text characters that were aligned with non-special pattern characters. A single subtraction yields answer.

To reduce from $k$-sparse instances of Hamming distance to $2k$-approximated Hamming distance, we follow an analogous reduction from \cite{DBLP:journals/corr/abs-1711-03887} (c.f. Lemma A.1) that reduces Hamming distance on $\mathbb{N}^+ + \{\ast\}$ to Hamming distance on $\mathbb{N}$. Write first instance such that $T_1[i] = T[i]$ iff $T[i] \not= \ast$ and $T_1[i] = 0$ iff $T[i] = \ast$ (with analogous transformation on $P$ to compute $P_1$). Write second instance such that $T_2[i]=1$ iff $T_2[i]\not=\ast$ and $T_2[i]=0$ iff $T[i] = \ast$ (with the same transformation on $P$ to compute $P_2$). Now we observe that $\text{Ham}(T[i],P[j]) = \text{Ham}(T_1[i],P_1[j]) - \text{Ham}(T_2[i],P_2[j])$, thus it is enough to compute exact Hamming text-to-pattern distances on those two instances and subtract them. However, we observe that in both of them, there are in total at most $2k$ characters different than 0, thus $2k$-approximated Hamming distance works just as fine.
\end{proof}

We now observe that the proof of Theorem~\ref{th:ham_l1_reduction} follows automatically from plugging Lemma~\ref{lem:rle_to_approximated} in place of Corollary~\ref{cor:run_complexity} in the proof of Theorem~\ref{th:approximated}.
We conclude this section with a remark on necessity of applying reduction to kernelized version of the problems, instead of directly to approximated problems. That is, some coefficients $\alpha_i$ are negative, which makes the reduction fail to work with arithmetic on numbers $\{0,1,\ldots,k,\infty\}$.

\bibliographystyle{alpha}
\bibliography{bib}


\end{document}